\pgfplotsset{compat=newest}
\newcommand{\cs}{\texttt{PS}}
\newtheorem{definition}{Definition}
\newtheorem{lemma}{Lemma}
\pgfplotsset{compat=1.10}
\def\BibTeX{{\rm B\kern-.05em{\sc i\kern-.025em b}\kern-.08em
		T\kern-.1667em\lower.7ex\hbox{E}\kern-.125emX}}
\begin{document}
	\title{Goal-Oriented Communications in Federated Learning via Feedback on Risk-Averse Participation}
	\author{Shashi Raj Pandey, Van Phuc Bui, Petar Popovski%
		\\
		Department of Electronic Systems, Aalborg University, Denmark\\
		Emails: \{srp, vpb, petarp\}@es.aau.dk
		\thanks{This work was supported by the Villum Investigator Grant “WATER” from the Velux Foundation, Denmark.}}

	\maketitle
	\begin{abstract}
		We treat the problem of client selection in a Federated Learning (FL) setup, where the learning objective and the local incentives of the participants are used to formulate a goal-oriented communication problem. Specifically, we incorporate the risk-averse nature of participants and obtain a communication-efficient on-device performance, while relying on feedback from the Parameter Server (\texttt{PS}). 
		A client has to decide its transmission plan on when not to participate in FL. This is based on its intrinsic incentive, which is 
		the value of the trained global model upon  participation by this client. Poor updates not only plunge the performance of the global model with added communication cost but also propagate the loss in performance on other participating devices. We cast the relevance of local updates as \emph{semantic information} for developing local transmission strategies, i.e., making a decision on when to ``not transmit". The devices use feedback about the state of the PS and evaluate their contributions in training the learning model in each aggregation period, which eventually lowers the number of occupied connections. Simulation results validate the efficacy of our proposed approach, with up to $1.4\times$ gain in communication links utilization as compared with the baselines.
	\end{abstract}
	
	\begin{IEEEkeywords}
		goal-oriented communication, federated learning, risk analysis, semantic feedback, multi-arm bandit (MAB), personalization 
	\end{IEEEkeywords}
	
	\IEEEpeerreviewmaketitle	
	\section{Introduction}\label{sec:intro}
Semantic communication (SemCom) is pursued to go beyond Shannon's interpretation of reliable transmission as an engineering problem, where all bits are treated equally, and focus on the rationale behind data transmission and its utility at the receiver's end through interpretation of content, context and timing aspects \cite{gunduz2022beyond, kountouris2021semantics, strinati20216g, 10024766}. Consequently, this perspective offers methodologies to limit unnecessary transmissions and prioritize data to improve goal-oriented performance; hence, SemCom is unanimously attributed to being goal-oriented communications. 

Edge intelligence at scale is envisioned to be one of the prominent realizations of the next generation of communication systems, where goal-oriented communications can play a significant role and offer fundamental gains in several network KPIs (for example, in terms of latency, throughput, compute costs, communication costs, energy-efficiency, and so on). Edge intelligence attributes executing machine learning (ML) algorithms at the network edge to offer low-cost, high accuracy, and fast inference for downstream tasks. Satisfying such an objective necessitates methods to train high-quality learning models in a communication-efficient, decentralized manner without tampering with data privacy for heterogeneous services. Federated Learning (FL) \cite{mcmahan2017communication} is one of such popularized paradigms we are interested in this work, where we show a way to enable goal-oriented communications and absorb the corresponding gain in the performance metrics. FL allows training a learning model in a privacy-preserving distributed manner, with a central parameter server (\cs) soliciting devices to share the local learning parameters instead of their private raw data. The iterative model training process is refuelled with an updated global model after the model aggregation at each interaction round -- i.e., the global communication -- between the devices and the \cs. Two regimes of model aggregation exist in the FL operation: (i) Synchronous (SYNC) and (ii) Asynchronous (ASYNC). SYNC imposes tightly coupled iterative training procedures, with issues due to stragglers, leading to poor \emph{training efficiency}. ASYNC relaxes the aggregation period, with issues related to poor model updates, i.e., staleness; hence, poor \emph{communication efficiency}. Therefore, in both regimes, the selection of local models to incorporate in the aggregation process is at the crux of the model training procedure, which, in actuality, contributes to the number of incurred transmissions. 
        \begin{figure}[t!]
            \centering          
            \includegraphics[width=\linewidth]{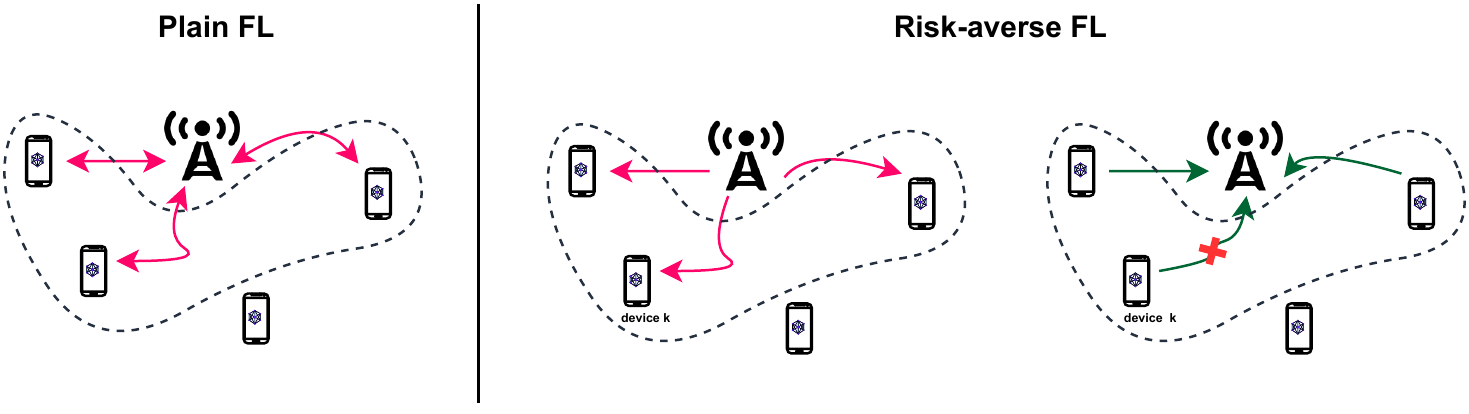}
            \caption{Risk-averse FL setting: (Left figure) Plain FL, and in (Right figure) device $k$ is a risk-averse client, which drops itself from the FL process following \emph{feedback} from \cs at the base station.}
            \label{fig:illustration}
        \end{figure}
        
Device selection problem \cite{kairouz2021advances, nishio2019client, yoshida2020mab}, i.e., activating a subset of devices to participate in the model training, and the design of an efficient model aggregation scheme \cite{pandey2022contribution}, i.e., choosing a subset of obtained local model parameters, has been of key interest in the recent FL research works. However, we argue device selection problem has always been thought of as the \cs's problem but not considered as an opportunity to stimulate strategic participation of the selected devices; hence, ignoring distributed coordination amongst participating devices for goal-oriented communications. Most of the existing research focuses on tuning the aggregation procedure by choosing devices that are valuable for the average model performance of the trained model. However, taking the perspective of a device, the improvements in their learning tasks through arbitrary participation strategy in the FL training can be nominal with an average model performance. This suggests the limited influence of participation of such devices; due to statistical and system-level heterogeneity, not all selected devices in each communication round contribute greatly to the trained model. Then, it is intuitive for devices to decide when not to transmit as well to minimize unnecessary computing and communication costs. We call such devices ``risk-averse'', where risk is the nominal value of the trained model upon participation. In related works, device selection at the \texttt{PS} overlooks the individual risks and the impact on the further participation of each specific device. In actuality, we observe incorporating risk-averse attributes of devices in participation, i.e., risk-averse participation, can address goal-oriented communication problems in FL. We show that it is imperative and of great value to keep track of the performance of the trained model at both \texttt{PS} and the devices, particularly to ensure generalization with improved personalization performance. Therein, feedback from the \texttt{PS} to evaluate risks on participation enables devices to align their transmission strategies while satisfying the overall goal-oriented performances after model aggregation at the \cs.

\subsection{Contributions} 
We pose the following problem of goal-oriented communications in FL: Given the global model and feedback on its instantaneous valuation from the \cs, how does the risk-averse participation of devices impact the number of transmissions? Precisely, we want to understand the perspective of devices in training the FL model with their contributions and the value of the global model for their local learning tasks, i.e., personalization performance. This is done following the proposed \emph{strategic} device selection and participation. The devices derive and evaluate the estimate of the local risk imposed by the trained model and make a participation decision in the FL process, anticipating the value of the investment in training the global model. Our approach is applicable to any closed-loop system and is shown to be effective in limiting unnecessary transmissions and improving goal-oriented performance.

\section{System Model and Problem Definition}\label{sec:sysmodel}
Consider a sensory network with a resourceful \cs and several privately owned resource-constrained IoT devices. Let $\mathcal{K}=[k]$ denote a set of $K$ IoT devices each holding data samples $\mathcal{D}_k$ of size $|\mathcal{D}_k|=D_k$, with each instance of data $x^{(i)} , \ldots, x^{D_k} \in [0,1]^p$ explained by $p$-features in a typical supervised setting, and the set of $D$ total training data samples is $\mathcal{D}$. As the devices are privately owned, we make a fair assumption that the devices are privacy-sensitive, i.e., the data are not shared with the \texttt{PS} unless for some incentives, e.g., monetary compensation for the incurred cost to transmit private data and/or cost of lowering individual privacy concerns. This limits \cs to infer a summary statistics model $\mathbf{w}\in \mathbb{R}^d$ of the sensory network with high confidence and efficiency. 

We formalize the learning problem as
\begin{equation}
    \underset{\mathbf{w}\in \mathbb{R}^d}{\min} \mathcal{L}(\mathbf{w}) \; \; \text{with} \; \; \mathcal{L} (\mathbf{w})  := \frac{1}{D}\sum\nolimits_{i= 1}^D L(x^{(i)}; \mathbf{w}), \label{eq:globalproblem}
\end{equation}
where $\mathbf{w}\in \mathbb{R}^d$ is the unknown model parameter explaining the summary statistics of the distributed data samples; $\mathcal{L}: \mathbb{R}^d \rightarrow \mathbb{R}$ is the empirical risk (i.e., the objective), and $L(x^{(i)}; \mathbf{w}): [0,1]^p \times \mathbb{R}^d \rightarrow \mathbb{R}, \forall i \in \mathcal{D}$ is the loss function associate with data samples at the device. Problem~\eqref{eq:globalproblem} can be solved in a distributed setting as of FL with iterative algorithms leveraging the variants of gradient descent methods, such as stochastic gradient descent (SGD). Then, the data-parallel reformulation of \eqref{eq:globalproblem} is
\begin{equation}
    \underset{\mathbf{w}\in \mathbb{R}^d}{\min} \mathcal{L} (\mathbf{w}) := \frac{1}{D}\sum\nolimits_{k= 1}^K \sum\nolimits_{i= 1}^{D_k} L_k(x^{(i)}; \mathbf{w}), \label{eq:FL_globalproblem}
\end{equation}
where $L_k$ is the local loss function. A basic operation sketch Federated SGD (FedSGD) \cite{mcmahan2017communication} to solve this is discussed below. At each communication round $t$, the \cs broadcast a copy of global model $\mathbf{w}^t$ to the devices of her interest pool (say $K$ devices), either randomly or following a specific strategy; the chosen devices act as a worker on the received global model; compute $\nabla L_k(\mathbf{w}^t)$, and strategically\footnote{We look at distributed coordination of devices in transmitting the local model parameters. We use risk as a semantic measure to capture the relevance of the device's data in FL training.} upload it to the \cs. The round ends once the \cs receives gradients from the devices and executes the model parameters update as
\begin{equation}
    \mathbf{w}^{t+1} = \mathbf{w}^t - \eta \sum\nolimits_{k \in \mathcal{K}} \frac{D_k}{D} \nabla L_k(\mathbf{w}^t), \label{eq:gd}
\end{equation}
where $\eta$ is a step size. Note that the aggregation of gradients in \eqref{eq:gd} can be modified, and the update strategy varies following design requirements, for instance, with multiple steps of GD during local training, as Federated Averaging (FedAvg) in \cite{mcmahan2017communication}. 

For this, the overall learning objective, set as a goal-oriented communications problem in FL, at the \cs, is to train a high-quality learning model with better generalization performance in a communication-efficient manner. This boils down to an iterative procedure of (i) efficient device selection for \cs -, with added feedback that allows (ii) strategic response of the selected devices to tune their transmission strategies that lower the participation risk, hence, exploiting the relevance of data contribution in FL training. In the following, we systematically present the proposed solution approach.

\section{Risk-averse device selection with feedback}
Let $\mathcal{A}^t_k \in \{0,1\}$ be the control algorithm for device $k$ denoting participation strategy in the model training process at time instance $t>0$. Then, each device is endowed with an instantaneous valuation function $v^t_k:\mathbb{R}^d \rightarrow \mathbb{R}$, where $v^t_k(\mathbf{w}^t|\mathcal{A}^t_k =1)$ denotes the value of device $k$ for model $\mathbf{w}$. In an iterative model training procedure given the time horizon $t\in\{1,2,\ldots,\tau\}$, such as in FL, the \cs supplies copies of the updated global model in every communication round $t$\footnote{A communication round in FL is defined as the interaction call between \cs and the involved devices for the exchange of learned model parameters, such as gradients and weights.}. Then, without loss of generality, we define the average valuation of device $k$ in the training procedure as $\frac{1}{\tau}\sum\nolimits_{t=1}^{\tau}v^t_k(\mathbf{w}^t|\mathcal{A}^t_k =1)$. 

\textbf{Local Regret function: } We use regret function $\mathcal{R}_k(\mathbf{w}^t):\mathbb{R}^d \rightarrow [0,1]$ to define drop in the valuation of the device's model parameters for building the global model $\mathbf{w}$ in every successive communication round. Formally, $\mathcal{R}_k(\mathbf{w}^t) = v^t_k(\mathbf{w}^t) - v^{t-1}_k(\mathbf{w}^{t-1})$, with $v^0_k(\mathbf{w}^0)$ as initial valuation of device $k$ of model $\mathbf{w}^0$ broadcast by the \cs. We will assume the devices drop off the training process at any $t$ such that $\mathcal{R}_k(\mathbf{w}^t)\le \delta_k\footnote{This parameter captures the system-level constraints on each device.}$, i.e., the communication round, after which the global model negatively influences the local performance. Note that this stopping time is private information, and the device contribution might be valuable for the \texttt{PS} to improve generalization.

\textbf{Multi-arm Bandits (MAB)-based Selection:} In each communication round, the \cs selects $\mathcal{M} \subseteq \mathcal{K}$ devices, where choosing each device is considered as an arm pulled for an obtained valuation, i.e., the instantaneous reward, plus a penalty factor to capture the number of times that arm has been pulled. 
Upper Confidence Bound (UCB) algorithm assigns the following value to each arm:
\begin{equation}
   f_{k, t} = \underbrace{\mu_{k,t}}_{\text{average marginal contribution}} + \underbrace{\sqrt{\frac{\ln t}{N_{k,t}}}}_{\text{penalty term}}, \label{eq:ucb}
\end{equation}
where $N_{k,t}$ is the number of times device $k$ has been selected before. Then, selecting an arm is equivalently solving the following:
\begin{equation}
   \mathcal{M}_{t} = \underset{k\in\{1,2,\ldots,K(t)\}}{\arg\max} f_{k, t},\label{eq:ucb_select}
\end{equation}
where $K(t)$ is the number of possible devices available at time instance $t$, and $\mathcal{M}_t$ is the set of $M_t$ devices selected amongst $K(t)$ choices, i.e., $\mathcal{M}_t \in\{1,2,\ldots,K(t)\}$. Since the \cs select a set of devices $\mathcal{M}_t$ per communication round, it naively applies the UCB score \eqref{eq:ucb}, orders possible candidate devices based on their contribution, and selects a subset of them per the available candidate selection budget to define $M_t$.

In \eqref{eq:ucb}, it is straightforward for the \texttt{PS} to evaluate past device selection choices; however, the value of instantaneous marginal contribution  $\mu_{k,t}$ is \emph{unknown} to the devices; hence, they naively transmit the local model parameters when asked for. At the \texttt{PS} side, we resort to Shapley Value (SV) \cite{shapley1953value} for evaluating the contribution of the received model parameters. In principle, SV calculates the average marginal contribution of each device $k$ on all possible combinations of coalition subsets as
\begin{equation}
    \mu_{k,t}(K, U) = \frac{1}{K!} \sum_{\mathcal{S} \subseteq \mathcal{K} \setminus \{k\}} \frac{U(\mathcal{S}\cup \{k\}) - U(\mathcal{S})}{\binom{K-1}{|\mathcal{S}|}},
    \label{eq:standard shapely}
\end{equation}
where the function $U(\cdot)$ gives the value for any subset of the devices, e.g., let $\mathcal{S}$ be a subset of $K$, then $U(\mathcal{S})$ gives the value of that subset \cite{pandey2022fedtoken}. For devices, in literature, ways such as the statistical methods and game-theoretic models \cite{agussurja2022convergence, jia2019towards} exist to use for an estimation of $\mu_{k,t}$ in each aggregation round $t$, though not considering the overall objective of the \cs. 

We argue \texttt{PS} can incorporate one of these techniques to maximize its expected long-term reward with the virtue of an efficient client selection by the time horizon while signalling the devices with feedback to infer the relevance of their transmission in prior. This is done as follows. Along with the aggregated global model, the \texttt{PS} feedback the model regret (defined later in \eqref{eq:cvar}) due to risk-averse participation in the successive communication round. The feedback is provided to specific devices; these devices are strategically selected by executing the UCB-based algorithm while considering the risk-averse participation of the devices. To formulate such device selection, the \texttt{PS} first sniffs the average risk across devices participating in the FL process. This allows it to maximize the benefits of its selection strategy in the subsequent communication round, i.e., derive $M_t$ for the global model broadcast and provide the feedback. After local training, the selected devices evaluate the utility of participation as the value of the trained model with the feedback and derive the relevance of their transmitted model parameters at the \texttt{PS} in the next round. This suggests an effective approach for goal-oriented communications in FL. Therein, we have the following definition.
  	\begin{algorithm}[t!]
        	\caption{\strut Risk-averse Participation with Feedback in FL}
        	\label{alg:risk}
        	\begin{algorithmic}[1]
        		\STATE{Initialize: Number of devices $K(t)$, $\delta_k, \forall k \in [k]$}.
        		 \REPEAT
                     \STATE {Execute UCB algorithm that solves \eqref{eq:ucb_select};}
        		   \STATE {Broadcast feedback $\{ \mathbf{w}^t, \mathcal{R}_{t, \theta}\}$};\\
        		    \FORALL{devices $m\in M_t$}
                    \STATE {Set $v_m^{(t+1)} \leftarrow \mathcal{R}_{t, \theta}$; }
                    \STATE {Solve the local learning problem using SGD};
                    \STATE {Evaluate $\mathcal{R}_m(\mathbf{w}^t)$}
                    \STATE {Update transmission variable $\mathcal{A}^t_m$};
                    \ENDFOR
        	     \UNTIL{the model converges.}
        	\end{algorithmic}
        	\label{Algorithm}
        \end{algorithm}
\begin{definition}\label{def:participation}
Consider a dynamic population of devices in the FL system. Then, if $\theta \in [0,1]$ is the average risk across participating devices, $K_{\theta}(\epsilon, t)$ is the number of devices having data that can contribute at least $\epsilon\in[0,1]$ improvement in the learning model.
\end{definition}
Following Definition~\ref{def:participation}, the available devices with average $\theta$ risk on the global learning model can be obtained as
\begin{equation}
    K_{\theta}(t) = \underset{\epsilon \rightarrow 1}{\lim}  \; K_{\theta}(\epsilon, t).
\end{equation}
Then, the total available devices at the communication round $t$ are
\begin{equation}
    K(t) = \sum\nolimits_{\theta}K_\theta(t),
\end{equation}
with the average risk of collaborative training defined as
\begin{equation}
    R(t) = \frac{ \sum\nolimits_{\theta}\theta \cdot K_\theta(t)}{K(t)}.
\end{equation}
Note that, in this work, we are not interested in modelling the arrival process of the local parameters. Instead, our aim is to develop a device selection strategy at the \texttt{PS} and the participation strategy at the risk-averse device's side with the feedback.
Definition~\ref{def:participation} considers average $\theta \in [0,1]$ risk across the participating devices that brings a value of averaged reward $[0, 1]$,  and plans also to empower the performance of risk-averse devices for better generalization. To be \emph{semantically relevant}, devices should also be strategic in updating the local parameters while the aggregator implementation of UCB for candidate selection in every global iteration. With this motivation, we adopt value-at-risk (VaR) in reformulating the stochastic bandit selection criterion to perform device selection at the \cs, which introduces extra signalling for the devices to align their transmission plan. We begin defining conditional VaR (CVaR) in the following.

For any bounded random variable $X$ with cumulative distribution function (CDF) $F(x) = \mathbb{P}[X\le x]$, the Conditional VaR (CVaR) of $X$ is defined as \cite{rockafellar2000optimization}
\begin{equation}
    \textrm{CVaR}_\alpha(X) := \underset{\nu}{\textrm{sup}}\bigg\{\nu - \frac{1}{\alpha}\mathbb{E}[(\nu - X)^+]\bigg\},
\end{equation}
at level $\alpha \in (0,1)$ signifying worst $\alpha-$ fraction of realizations, where $(x)^+ = \max(0, x)$. CVaR analysis allows the \texttt{PS} to consider the full distribution of the contribution of devices while making strategic device selection in the FL training process, given devices are risk-sensitive. It also captures the average drop in the learning performance following this naive device selection approach. We recast the device-selection problem using CVaR-based UCB as follows. Let $F_{\mu_k}$ be the CDF of the contribution selecting device $k$ on estimated $[0, \theta]$ capturing population risk. The true distribution of device contribution is obtained following the available samples in each round and evaluating marginal contributions of devices in training the global model. With the action profile on device participation upon random selection $\mathcal{A}_t$ with probability $\varepsilon$ at time $t$, i.e., selecting a subset of devices from a pool of $K(t)$ devices, we then use the definition of CVaR-regret at global aggregation round $\tau$ as 
\begin{equation}
    \mathcal{R}_{\tau, \theta} = \tau \max_k \bigg[\textrm{CVaR}_\theta(F_{\mu_k})\bigg] - \mathbb{E}\bigg\{ \sum\nolimits_{t=1}^{\tau}\textrm{CVaR}_\theta(F_{\mathcal{A}_t})\bigg\},
    \label{eq:cvar}
\end{equation}
to capture the contribution profile of selected devices. The right-hand term, considering the most optimistic estimates, can be approximated to be bounded as,
\begin{equation}
    \mathcal{R}_{\tau, \theta} \le \frac{4 \max\{\epsilon_k, \forall k \in \mathcal{A}_t\}}{\theta}\sqrt{\tau K \ln(\sqrt{2}\tau)},
    \label{eq:cvar_regret}
\end{equation}
where $\theta$ is updated in each global communication round with the resulting CVaR-based risk-averse device selection problem as
\begin{equation}
   \mathcal{M}_{t\in \{1,2,\ldots,\tau\}} = \underset{k\in\{1,2,\ldots,K(t)\}}{\arg\max} f_{k, t} - \mathbb{E}[\mathcal{R}_{t, \theta}].\label{eq:ucb_cvar_select}
\end{equation}
The derivation follows [\cite{dvoretzky1956asymptotic}, Theorem 1. \cite{tamkin2019distributionally}].
Algorithm~\ref{Algorithm}. summarizes the proposed approach of strategic selection and risk-averse participation to meet the learning objectives, i.e., improving training and learning efficiency through feedback; hence, goal-oriented communications in FL. Therein, we have the following lemma that quantifies the analytical bounds on stability and convergence of the average risk, as reflected by the average per-device model performance.

\begin{lemma}
Let $\theta_1, \theta_2, \ldots, \theta_k$ be independent random variables denoting the quality of local risk at device $k \in \mathcal{K}$ in any instance such that $|\theta_k| \le \theta_{\textrm{th}}$ almost surely, and $\mathbb{E}[\theta_k^2] \le \alpha_k$ for every $k$. Then, for every $\alpha \ge 0$ we have
\begin{equation}
    \mathsf{Pr}(|\theta - \mathbb{E}[\theta]| \ge \epsilon) \le \mathsf{exp}\bigg(-\frac{\epsilon^2}{2(A + \frac{\epsilon\theta_{\textrm{th}}}{3})}\bigg),
\end{equation}
where $\theta = \sum\nolimits_{k=1}^{K} \theta_k$ and $A = \sum\nolimits_{k=1}^{K}\epsilon_k$.
\end{lemma}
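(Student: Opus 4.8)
The plan is to recognize this as Bernstein's concentration inequality for a sum of independent, bounded random variables and to prove it via the exponential-moment (Chernoff) method. Writing $S := \theta - \mathbb{E}[\theta] = \sum_{k=1}^K (\theta_k - \mathbb{E}[\theta_k])$, I would first control the upper tail: for any $\lambda > 0$, Markov's inequality applied to $e^{\lambda S}$ together with independence of the $\theta_k$ gives
\[ \mathsf{Pr}(S \ge \epsilon) \le e^{-\lambda\epsilon}\prod_{k=1}^K \mathbb{E}\big[e^{\lambda(\theta_k - \mathbb{E}[\theta_k])}\big]. \]

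The core of the argument is a per-term bound on the centered moment generating function. Writing $Y_k := \theta_k - \mathbb{E}[\theta_k]$, which is mean-zero and bounded by $\theta_{\textrm{th}}$ in magnitude, I would expand $\mathbb{E}[e^{\lambda Y_k}]$ as a power series, use $\mathbb{E}[Y_k^j] \le \theta_{\textrm{th}}^{\,j-2}\,\mathbb{E}[Y_k^2] \le \theta_{\textrm{th}}^{\,j-2}\alpha_k$ for $j \ge 2$ (the variance proxy following from $\mathbb{E}[\theta_k^2] \le \alpha_k$), and then invoke the elementary factorial bound $j! \ge 2\cdot 3^{j-2}$ to sum the resulting geometric series. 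This yields, for $0 < \lambda < 3/\theta_{\textrm{th}}$,
\[ \mathbb{E}[e^{\lambda Y_k}] \le \exp\!\Big(\tfrac{\alpha_k\lambda^2/2}{1-\lambda\theta_{\textrm{th}}/3}\Big), \]
so that, with $A = \sum_k \alpha_k$, the product over $k$ collapses to the same exponential form with $A$ in place of $\alpha_k$.

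With the MGF controlled, I would substitute back and optimize the Chernoff exponent $-\lambda\epsilon + \tfrac{A\lambda^2/2}{1-\lambda\theta_{\textrm{th}}/3}$ over $\lambda$. The choice $\lambda = \epsilon/(A + \epsilon\theta_{\textrm{th}}/3)$ lies in the admissible range $(0,3/\theta_{\textrm{th}})$ and produces exactly the exponent $-\epsilon^2/\big(2(A + \epsilon\theta_{\textrm{th}}/3)\big)$. Running the identical argument on $-S$ and combining the two tails via a union bound then controls $\mathsf{Pr}(|S| \ge \epsilon)$ and delivers the stated estimate.

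I expect the main obstacle to be the MGF bound itself: the power-series truncation must be justified carefully, since it is precisely the factorial inequality $j!\ge 2\cdot 3^{j-2}$ that manufactures the characteristic $\theta_{\textrm{th}}/3$ term in the denominator. One must also be attentive to the centering step, because $|\theta_k|\le\theta_{\textrm{th}}$ only gives $|Y_k|\le 2\theta_{\textrm{th}}$ in general, so I would either assume the boundedness hypothesis applies directly to the centered variables or track the resulting constant. Finally, two minor bookkeeping points remain: the two-sided union bound formally introduces a factor $2$ in front of the exponential, which the statement suppresses, and for internal consistency the hypothesis should read $A = \sum_k \alpha_k$ (matching the variance proxies $\mathbb{E}[\theta_k^2]\le\alpha_k$) with the quantifier taken over $\epsilon \ge 0$.
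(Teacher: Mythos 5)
Your proof is correct and takes essentially the same route as the paper: the paper's entire proof is a one-line appeal to Bernstein's inequality, and what you have written out is precisely the standard Chernoff/moment-generating-function derivation of that inequality, so you are simply supplying the details the paper delegates to the citation. Your side remarks are also well taken --- the statement's $A=\sum_{k}\epsilon_k$ should indeed read $A=\sum_{k}\alpha_k$ to match the variance proxies, the quantifier should range over $\epsilon\ge 0$ rather than $\alpha\ge 0$, and the two-sided bound formally carries a factor of $2$ (and a centering constant) that the stated inequality suppresses.
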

\begin{proof}
The proof is easy to establish following Bernstein's inequality \cite{hall1967measures}.
\end{proof}

	\section{Performance Evaluation}
\subsection{Settings}
\subsubsection{Dataset and Model}
To validate the efficacy of our proposed approach, we conducted experiments with the well-known MNIST dataset \cite{lecun1998gradient}, covering $70000$ data samples of $28X28$ grey images in 10 classes equally. The \texttt{PS} is training a convolutional neural network (CNN) model in a distributed manner. The CNN model has two hidden layers for feature extraction (10 filters for the first layer and 20 filters for the second layer with kernel size as $5\times5$) and 2 fully connected layers for classification.
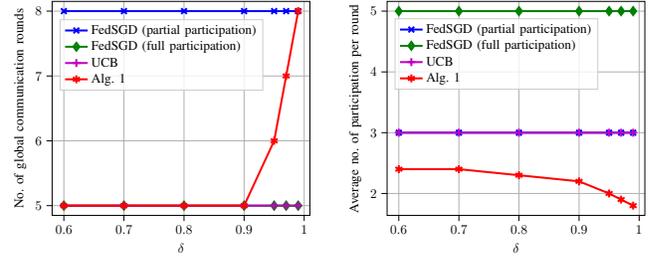
\begin{figure}[t!]
\begin{subfigure}[b]{0.20\textwidth}
\centering
\begin{tikzpicture}[scale=0.5]

\definecolor{darkgray176}{RGB}{176,176,176}
\definecolor{darkviolet1910191}{RGB}{191,0,191}
\definecolor{green01270}{RGB}{0,127,0}
\definecolor{lightgray204}{RGB}{204,204,204}

\begin{axis}[
legend cell align={left},
legend style={
  fill opacity=0.8,
  draw opacity=1,
  text opacity=1,
  at={(0.03,0.75)},
  anchor=west,
  draw=lightgray204
},
tick align=outside,
tick pos=left,
x grid style={darkgray176},
xlabel={\(\displaystyle \delta\)},
xmajorgrids,
xmin=0.5805, xmax=1.0095,
xtick style={color=black},
xticklabel style={},
y grid style={darkgray176},
ylabel={No. of global communication rounds},
ymajorgrids,
ymin=4.85, ymax=8.15,
ytick style={color=black}
]
\addplot [line width=1.5, blue, mark=x, mark size=3, mark options={solid}]
table {%
0.6 8
0.7 8
0.8 8
0.9 8
0.95 8
0.97 8
0.99 8
};
\addlegendentry{FedSGD (partial participation)}
\addplot [line width=1.5, green01270, mark=diamond*, mark size=3, mark options={solid}]
table {%
0.6 5
0.7 5
0.8 5
0.9 5
0.95 5
0.97 5
0.99 5
};
\addlegendentry{FedSGD (full participation)}
\addplot [line width=1.5, darkviolet1910191, mark=+, mark size=3, mark options={solid}]
table {%
0.6 5
0.7 5
0.8 5
0.9 5
0.95 5
0.97 5
0.99 5
};
\addlegendentry{UCB}
\addplot [line width=1.5, red, mark=asterisk, mark size=3, mark options={solid}]
table {%
0.6 5
0.7 5
0.8 5
0.9 5
0.95 6
0.97 7
0.99 8
};
\addlegendentry{Alg. 1}
\end{axis}
\end{tikzpicture}

\caption{Total communication rounds.}
\label{fig:1}
\end{subfigure}\hspace{2em}
\begin{subfigure}[b]{0.2\textwidth}
\centering
\begin{tikzpicture}[scale=0.5]

\definecolor{darkgray176}{RGB}{176,176,176}
\definecolor{darkviolet1910191}{RGB}{191,0,191}
\definecolor{green01270}{RGB}{0,127,0}
\definecolor{lightgray204}{RGB}{204,204,204}

\begin{axis}[
legend cell align={left},
legend style={
  fill opacity=0.8,
  draw opacity=1,
  text opacity=1,
  at={(0.03,0.92)},
  anchor=north west,
  draw=lightgray204
},
tick align=outside,
tick pos=left,
x grid style={darkgray176},
xlabel={\(\displaystyle \delta\)},
xmajorgrids,
xmin=0.5805, xmax=1.0095,
xtick style={color=black},
xticklabel style={},
y grid style={darkgray176},
ylabel={Average no. of participation per round},
ymajorgrids,
ymin=1.64, ymax=5.16,
ytick style={color=black}
]
\addplot [line width=1.5, blue, mark=x, mark size=3, mark options={solid}]
table {%
0.6 3
0.7 3
0.8 3
0.9 3
0.95 3
0.97 3
0.99 3
};
\addlegendentry{FedSGD (partial participation)}
\addplot [line width=1.5, green01270, mark=diamond*, mark size=3, mark options={solid}]
table {%
0.6 5
0.7 5
0.8 5
0.9 5
0.95 5
0.97 5
0.99 5
};
\addlegendentry{FedSGD (full participation)}
\addplot [line width=1.5, darkviolet1910191, mark=+, mark size=3, mark options={solid}]
table {%
0.6 3
0.7 3
0.8 3
0.9 3
0.95 3
0.97 3
0.99 3
};
\addlegendentry{UCB}
\addplot [line width=1.5, red, mark=asterisk, mark size=3, mark options={solid}]
table {%
0.6 2.4
0.7 2.4
0.8 2.3
0.9 2.2
0.95 2
0.97 1.9
0.99 1.8
};
\addlegendentry{Alg. 1}
\end{axis}
\end{tikzpicture}
\caption{Communication links per communication round.}
\label{fig:2}   
\end{subfigure}
\vspace{0.2cm}
\caption{Comparative performance evaluation to reach $80\%$ model accuracy for varying $\delta$.} 
\label{fig:com_cost}
\end{figure}
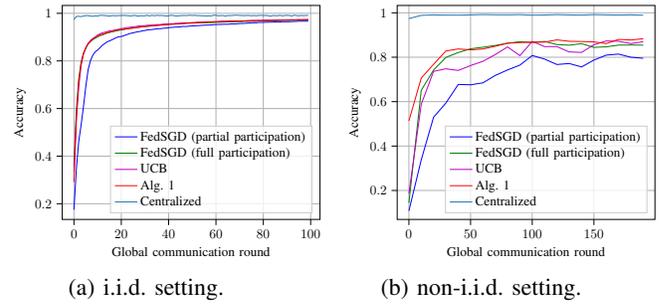
\begin{figure}[t!]
\begin{subfigure}[b]{0.20\textwidth}
\centering
\begin{tikzpicture}[scale=0.5]

\definecolor{darkgray176}{RGB}{176,176,176}
\definecolor{darkviolet1910191}{RGB}{191,0,191}
\definecolor{green01270}{RGB}{0,127,0}
\definecolor{lightgray204}{RGB}{204,204,204}
\definecolor{steelblue31119180}{RGB}{31,119,180}

\begin{axis}[
legend cell align={left},
legend style={
  fill opacity=0.8,
  draw opacity=1,
  text opacity=1,
  at={(0.97,0.03)},
  anchor=south east,
  draw=lightgray204
},
tick align=outside,
tick pos=left,
x grid style={darkgray176},
xlabel={Global communication round},
xmajorgrids,
xmin=-4.95, xmax=103.95,
xtick style={color=black},
y grid style={darkgray176},
ylabel={Accuracy},
ymajorgrids,
ymin=0.134035, ymax=1.033065,
ytick style={color=black}
]
\addplot [semithick, blue]
table {%
0 0.1749
1 0.3435
2 0.4601
3 0.5295
4 0.6062
5 0.6924
6 0.7633
7 0.8019
8 0.8246
9 0.8384
10 0.846
11 0.8558
12 0.866
13 0.8727
14 0.8803
15 0.8842
16 0.8884
17 0.8937
18 0.8983
19 0.9017
20 0.9025
21 0.9058
22 0.9091
23 0.9129
24 0.9161
25 0.918
26 0.9205
27 0.9227
28 0.9239
29 0.9268
30 0.9278
31 0.9302
32 0.9311
33 0.9319
34 0.9333
35 0.9347
36 0.9355
37 0.936
38 0.9377
39 0.9377
40 0.9395
41 0.9408
42 0.9412
43 0.9424
44 0.9424
45 0.9428
46 0.9454
47 0.9445
48 0.9454
49 0.9464
50 0.9469
51 0.9475
52 0.9481
53 0.9487
54 0.9495
55 0.9501
56 0.9504
57 0.9505
58 0.9514
59 0.9524
60 0.9527
61 0.9534
62 0.9531
63 0.954
64 0.9541
65 0.9541
66 0.9552
67 0.9558
68 0.9555
69 0.9564
70 0.957
71 0.9576
72 0.9577
73 0.9586
74 0.9591
75 0.96
76 0.9605
77 0.9601
78 0.9618
79 0.9616
80 0.962
81 0.9625
82 0.9628
83 0.9628
84 0.9632
85 0.9641
86 0.9632
87 0.9636
88 0.9644
89 0.9644
90 0.9646
91 0.9648
92 0.9653
93 0.9658
94 0.9661
95 0.9671
96 0.9669
97 0.9672
98 0.967
99 0.9671
};
\addlegendentry{FedSGD (partial participation)}
\addplot [semithick, green01270]
table {%
0 0.3601
1 0.5916
2 0.7231
3 0.7948
4 0.8315
5 0.8542
6 0.8712
7 0.8813
8 0.8895
9 0.8954
10 0.9008
11 0.9064
12 0.9093
13 0.9128
14 0.9172
15 0.9192
16 0.9222
17 0.9242
18 0.9266
19 0.9283
20 0.9305
21 0.9315
22 0.9345
23 0.9362
24 0.9372
25 0.9383
26 0.9396
27 0.9403
28 0.9424
29 0.9451
30 0.9442
31 0.9456
32 0.9471
33 0.9479
34 0.949
35 0.9497
36 0.9508
37 0.9523
38 0.9523
39 0.9533
40 0.9545
41 0.9551
42 0.9551
43 0.9557
44 0.956
45 0.9569
46 0.9575
47 0.9576
48 0.9584
49 0.9587
50 0.9594
51 0.9594
52 0.9597
53 0.9601
54 0.961
55 0.9607
56 0.9603
57 0.9611
58 0.9607
59 0.9617
60 0.962
61 0.9626
62 0.9621
63 0.9631
64 0.9635
65 0.9644
66 0.9643
67 0.9644
68 0.9646
69 0.9653
70 0.9651
71 0.965
72 0.9664
73 0.966
74 0.9661
75 0.9668
76 0.9664
77 0.9664
78 0.9674
79 0.9681
80 0.9679
81 0.9688
82 0.9686
83 0.969
84 0.9689
85 0.9684
86 0.9692
87 0.9693
88 0.9702
89 0.9706
90 0.9702
91 0.97
92 0.9708
93 0.9706
94 0.9709
95 0.9706
96 0.9711
97 0.9712
98 0.9718
99 0.9712
};
\addlegendentry{FedSGD (full participation)}
\addplot [semithick, darkviolet1910191]
table {%
0 0.3341
1 0.5671
2 0.6864
3 0.7805
4 0.8294
5 0.8531
6 0.8726
7 0.887
8 0.8947
9 0.8999
10 0.9088
11 0.9139
12 0.9183
13 0.9203
14 0.9231
15 0.9262
16 0.9269
17 0.9296
18 0.9315
19 0.9334
20 0.9361
21 0.9371
22 0.9395
23 0.9408
24 0.9421
25 0.9432
26 0.9444
27 0.9464
28 0.947
29 0.9475
30 0.948
31 0.9499
32 0.9501
33 0.9509
34 0.9508
35 0.9528
36 0.9523
37 0.9531
38 0.9549
39 0.9551
40 0.9559
41 0.9562
42 0.9564
43 0.9579
44 0.9575
45 0.9587
46 0.9595
47 0.96
48 0.9602
49 0.9601
50 0.9614
51 0.9613
52 0.9624
53 0.9621
54 0.9627
55 0.9634
56 0.9638
57 0.9641
58 0.9643
59 0.9647
60 0.9652
61 0.9653
62 0.9664
63 0.9661
64 0.9662
65 0.967
66 0.9671
67 0.9678
68 0.9676
69 0.9672
70 0.9681
71 0.9684
72 0.9683
73 0.9684
74 0.9687
75 0.9692
76 0.9688
77 0.9689
78 0.9694
79 0.9694
80 0.9704
81 0.9703
82 0.971
83 0.9704
84 0.9707
85 0.971
86 0.9709
87 0.9711
88 0.9712
89 0.9712
90 0.9721
91 0.9718
92 0.9725
93 0.9719
94 0.9724
95 0.9722
96 0.9727
97 0.9725
98 0.9729
99 0.9733
};
\addlegendentry{UCB}
\addplot [semithick, red]
table {%
0 0.2911
1 0.5411
2 0.7117
3 0.7846
4 0.8292
5 0.8554
6 0.8741
7 0.882
8 0.8929
9 0.8998
10 0.9059
11 0.9089
12 0.9125
13 0.9155
14 0.9189
15 0.9209
16 0.9243
17 0.9257
18 0.9299
19 0.9303
20 0.9323
21 0.9332
22 0.9343
23 0.9343
24 0.9369
25 0.9379
26 0.9398
27 0.9409
28 0.942
29 0.9433
30 0.9449
31 0.9455
32 0.946
33 0.947
34 0.9483
35 0.9489
36 0.9496
37 0.9503
38 0.9507
39 0.9517
40 0.9531
41 0.9535
42 0.9542
43 0.9548
44 0.9557
45 0.9561
46 0.957
47 0.9579
48 0.9594
49 0.9595
50 0.9595
51 0.9601
52 0.9611
53 0.9617
54 0.9616
55 0.9624
56 0.9634
57 0.9637
58 0.9637
59 0.9646
60 0.9645
61 0.9654
62 0.966
63 0.9667
64 0.9667
65 0.9666
66 0.9669
67 0.968
68 0.9679
69 0.9672
70 0.9678
71 0.9681
72 0.968
73 0.9693
74 0.9694
75 0.9698
76 0.9702
77 0.9701
78 0.9704
79 0.9699
80 0.9705
81 0.9709
82 0.9712
83 0.9708
84 0.971
85 0.9713
86 0.9725
87 0.9719
88 0.9724
89 0.9723
90 0.9728
91 0.9728
92 0.9732
93 0.9731
94 0.9732
95 0.9742
96 0.9731
97 0.9741
98 0.974
99 0.9744
};
\addlegendentry{Alg. 1}
\addplot [semithick, steelblue31119180]
table {%
0 0.974
1 0.9871
2 0.9875
3 0.9863
4 0.9885
5 0.9903
6 0.9904
7 0.9886
8 0.9873
9 0.9897
10 0.9884
11 0.9916
12 0.9913
13 0.9906
14 0.9898
15 0.9883
16 0.9903
17 0.9902
18 0.9896
19 0.9908
20 0.9906
21 0.9914
22 0.9906
23 0.9902
24 0.9888
25 0.9908
26 0.991
27 0.9883
28 0.9898
29 0.9895
30 0.9897
31 0.9917
32 0.9888
33 0.9916
34 0.9896
35 0.9917
36 0.9907
37 0.9912
38 0.99
39 0.989
40 0.9894
41 0.9918
42 0.9908
43 0.9889
44 0.9891
45 0.9905
46 0.9909
47 0.9906
48 0.991
49 0.9921
50 0.9907
51 0.9918
52 0.9905
53 0.9904
54 0.9898
55 0.9896
56 0.989
57 0.9896
58 0.9891
59 0.9919
60 0.9921
61 0.9897
62 0.9904
63 0.9906
64 0.9884
65 0.9908
66 0.9905
67 0.9909
68 0.9922
69 0.9919
70 0.9912
71 0.9908
72 0.9905
73 0.9911
74 0.991
75 0.9898
76 0.991
77 0.9896
78 0.9911
79 0.9913
80 0.9909
81 0.991
82 0.9905
83 0.9877
84 0.9907
85 0.9902
86 0.9898
87 0.9918
88 0.9895
89 0.9908
90 0.9917
91 0.991
92 0.9888
93 0.9922
94 0.9907
95 0.9897
96 0.9911
97 0.9903
98 0.9911
99 0.9913
};
\addlegendentry{Centralized}
\end{axis}

\end{tikzpicture}
\caption{i.i.d. setting.}
\label{fig:lp1}
\end{subfigure}\hspace{2em}
\begin{subfigure}[b]{0.2\textwidth}
\centering
\begin{tikzpicture}[scale=0.5]

\definecolor{darkgray176}{RGB}{176,176,176}
\definecolor{darkviolet1910191}{RGB}{191,0,191}
\definecolor{green01270}{RGB}{0,127,0}
\definecolor{lightgray204}{RGB}{204,204,204}
\definecolor{steelblue31119180}{RGB}{31,119,180}

\begin{axis}[
legend cell align={left},
legend style={
  fill opacity=0.8,
  draw opacity=1,
  text opacity=1,
  at={(0.97,0.03)},
  anchor=south east,
  draw=lightgray204
},
tick align=outside,
tick pos=left,
x grid style={darkgray176},
xlabel={Global communication round},
xmajorgrids,
xmin=-9.5, xmax=199.5,
xtick style={color=black},
y grid style={darkgray176},
ylabel={Accuracy},
ymajorgrids,
ymin=0.069605, ymax=1.033065,
ytick style={color=black}
]
\addplot [semithick, blue]
table {%
0 0.10835
10 0.33785
20 0.5298
30 0.5938
40 0.67735
50 0.6758
60 0.68455
70 0.71795
80 0.74275
90 0.76455
100 0.80785
110 0.7917
120 0.76725
130 0.7716
140 0.7563
150 0.78715
160 0.80945
170 0.8141
180 0.79985
190 0.79525
};
\addlegendentry{FedSGD (partial participation)}
\addplot [semithick, green01270]
table {%
0 0.1444
10 0.65085
20 0.7425
30 0.79915
40 0.82115
50 0.838
60 0.84495
70 0.85305
80 0.86295
90 0.8698
100 0.86685
110 0.87115
120 0.8571
130 0.8542
140 0.86115
150 0.8442
160 0.84685
170 0.8545
180 0.85505
190 0.85445
};
\addlegendentry{FedSGD (full participation)}
\addplot [semithick, darkviolet1910191]
table {%
0 0.18615
10 0.5881
20 0.7363
30 0.7479
40 0.74045
50 0.76245
60 0.78155
70 0.81195
80 0.84645
90 0.80715
100 0.8712
110 0.8479
120 0.84785
130 0.82385
140 0.8217
150 0.8564
160 0.8732
170 0.87195
180 0.8614
190 0.8697
};
\addlegendentry{UCB}
\addplot [semithick, red]
table {%
0 0.5134
10 0.70605
20 0.76825
30 0.8281
40 0.8372
50 0.8333
60 0.83705
70 0.8493
80 0.8632
90 0.866
100 0.8677
110 0.868
120 0.8781
130 0.87175
140 0.87085
150 0.869
160 0.86155
170 0.8801
180 0.87765
190 0.88325
};
\addlegendentry{Alg. 1}
\addplot [semithick, steelblue31119180]
table {%
0 0.974
10 0.9884
20 0.9906
30 0.9897
40 0.9894
50 0.9907
60 0.9921
70 0.9912
80 0.9909
90 0.9917
100 0.9893
110 0.9902
120 0.9917
130 0.9905
140 0.9897
150 0.9919
160 0.9907
170 0.99
180 0.9908
190 0.9891
};
\addlegendentry{Centralized}
\end{axis}

\end{tikzpicture}
\caption{non-i.i.d. setting.}
\label{fig:lp_2}   
\end{subfigure}
\vspace{0.2cm}
\caption{Performance analysis in terms of testing accuracy versus the number of communication rounds.} 
\label{fig:learning_performance}
\end{figure}
\subsubsection{Implementation details} 
The assessment of goal-oriented performance in communication and training efficiency at the central server \texttt{PS} involves analyzing the independent and identically distributed (i.i.d.) and non-i.i.d. characteristics of the used MNIST dataset. We consider the availability of $K=50$ devices, with replication of the data distribution and non-i.i.d. setting for performance evaluation using the methodology outlined in \cite{pandey2022fedtoken}. For training, each device model is trained for one local epoch by the optimizer, with a batch size of 30 and a learning rate of 0.001. We utilized the SGD optimizer for the i.i.d. MNIST dataset and the Adam optimizer for the non-i.i.d. setting to enhance the overall performance. For device selection, we set the selection probability $\varepsilon = 0.1$.

\subsection{Results and Analysis}
We evaluate and compare the performance of our proposed algorithm with four other intuitive baselines: (i) \textit{Centralized}, where all the training samples are available at the \cs; (ii) \textit{FedSGD} \cite{mcmahan2017communication} (with \emph{full} participation); (iii) \textit{FedSGD} (with \emph{partial} participation), where $60\%$ of the devices are selected, and (iv) \textit{UCB} \cite{yoshida2020mab}, where devices are selected purely on UCB metric without considering risk-averse characteristics of the devices.

In Fig.~\ref{fig:com_cost}, we show the performance of our proposed approach in relation to the number of global communication rounds needed to achieve an 80\% model accuracy and the average number of communication links utilized per communication round within the first ten rounds. As observed in Fig.~\ref{fig:1} and \ref{fig:2}, UCB offers a gain in the average communication links in comparison with the traditional FedSGD scheme while requiring the same number of communication rounds to achieve $80\%$ of accuracy. This is expected, as UCB allows strategic selection of devices following their marginal contributions in improving the model performance. Meanwhile, the proposed Alg.~\ref{Algorithm} incorporates risk-averse participation and further improves the performance in terms of incurred number of communication links per communication round as compared to the UCB approach while ensuring the model accuracy, especially with $\delta \leq 0.9$. With $\delta = 0.9$, Alg.~1 exhibited an average link consumption of 2.2, which is 1.4$\times$ lower than UCB and FedSGD techniques, while requiring the same five rounds to achieve $80\%$ model accuracy. This suggests a significant reduction in the number of transmissions to satisfy the KPIs with goal-oriented communications in FL training. 

\begin{figure}
    \centering
    \includegraphics[width=0.4\textwidth]{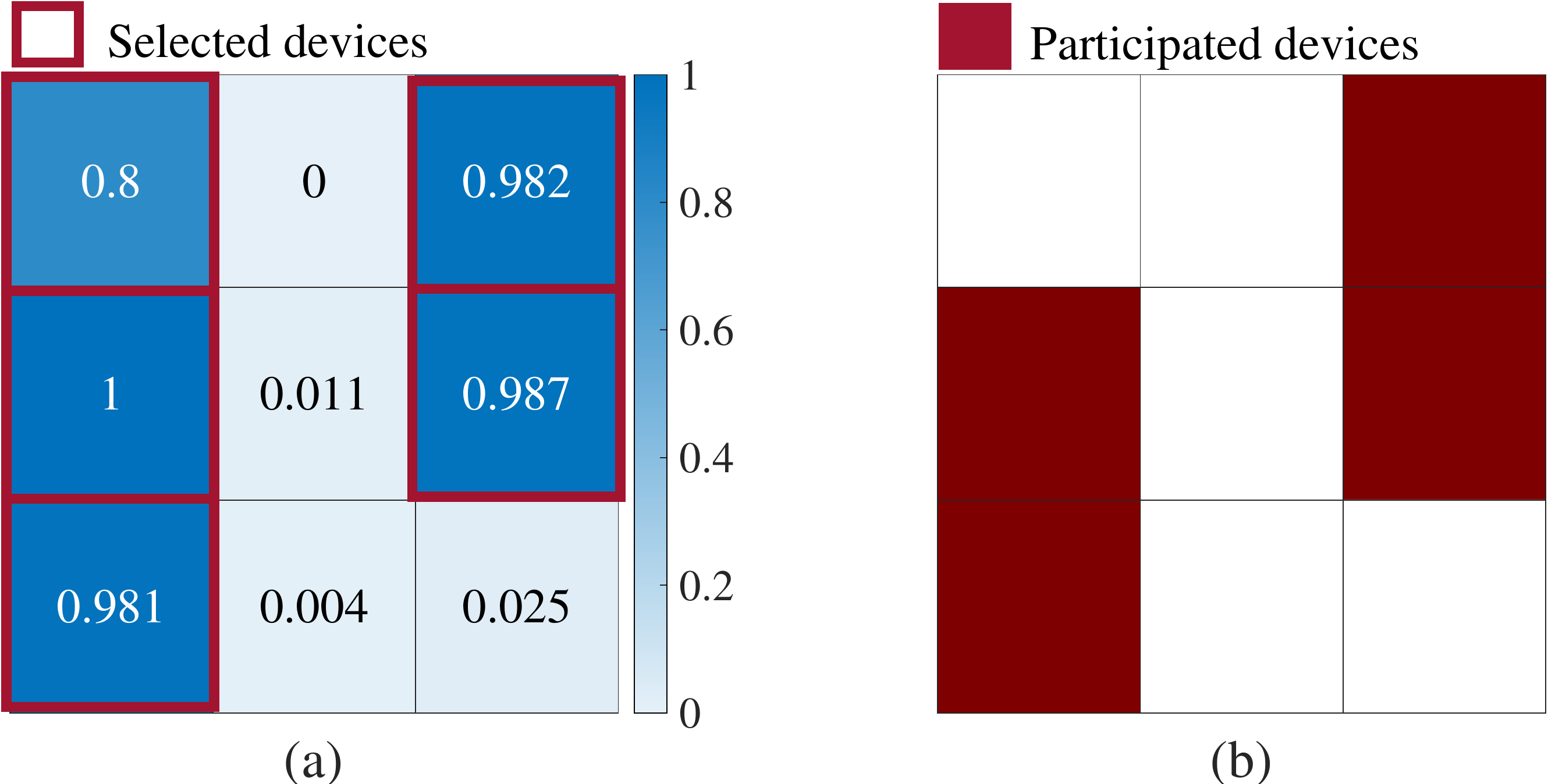}
    \caption{Snapshot of risk-averse device selection following their contributions and participation with $|\mathcal{A}_t|= 9$ at $t=10$. }
    \label{fig:snapshot}
\end{figure}
Fig.~\ref{fig:learning_performance} demonstrates the learning performance in terms of convergence of testing accuracy under both i.i.d and non-i.i.d data settings. We observe Alg.~\ref{Algorithm} outperforms the other baselines while keeping a competitive performance with UCB (both in partial and full participation schemes) with fewer connections per communication round. This analysis is consistent with our observations in Fig.~\ref{fig:com_cost}. Obviously, for the non-i.i.d. setting, we see a drop in the performance; however, our approach outperforms the baseline with lower utilization of the communication links. In Fig.~\ref{fig:snapshot}, we take a snapshot of the risk-averse device selection and device participation strategies at $t=10$ to describe the operation of the proposed algorithm. In this scenario, we consider the availability of $9$ devices working in tandem with heatmaps reflecting contributions evaluated incorporating risk-averse participation and devices are selected accordingly for $M_t=5$ (represented by the ones enclosed in red boxes). Following the feedback, we observe the strategic participation of devices (Alg. 1), where only a subset of devices (represented by red blocks) transmit their local model parameters to \cs. hence, we achieve greater flexibility compared to the naive UCB algorithm, which involves a fixed set of devices selected based on their marginal contribution only and their subsequent participation. Our approach considers the risk-averse participation of devices, considering the influence of instantaneous transmission of local model parameters in improving the global model performance. This suggests a potential reduction in the number of irrelevant transmissions that would otherwise contribute to higher cumulative regrets.

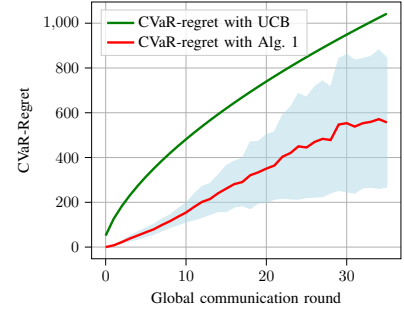
\begin{figure}[t!]
    \centering
\begin{tikzpicture}[scale=0.6]

\definecolor{darkgray176}{RGB}{176,176,176}
\definecolor{green01270}{RGB}{0,127,0}
\definecolor{lightblue}{RGB}{173,216,230}
\definecolor{lightgray204}{RGB}{204,204,204}

\begin{axis}[
legend cell align={left},
legend style={
  fill opacity=0.8,
  draw opacity=1,
  text opacity=1,
  at={(0.03,0.97)},
  anchor=north west,
  draw=lightgray204
},
tick align=outside,
tick pos=left,
x grid style={darkgray176},
xlabel={Global communication round},
xmajorgrids,
xmin=-1.75, xmax=36.75,
xtick style={color=black},
y grid style={darkgray176},
ylabel={CVaR-Regret},
ymajorgrids,
ymin=-52.1307186072728, ymax=1094.74509075273,
ytick style={color=black}
]
\path [draw=lightblue, fill=lightblue, opacity=0.5]
(axis cs:0,0)
--(axis cs:0,0)
--(axis cs:1,6.17655858668271)
--(axis cs:2,16.2463629300566)
--(axis cs:3,26.6969028710633)
--(axis cs:4,35.8594989030181)
--(axis cs:5,46.5547768339916)
--(axis cs:6,57.1669866764394)
--(axis cs:7,72.3051471236469)
--(axis cs:8,87.2249447299569)
--(axis cs:9,99.4049636072842)
--(axis cs:10,112.163275485104)
--(axis cs:11,121.587810749739)
--(axis cs:12,132.615513630196)
--(axis cs:13,145.418889089182)
--(axis cs:14,158.592383956229)
--(axis cs:15,158.423781159478)
--(axis cs:16,178.466441694242)
--(axis cs:17,184.651459536586)
--(axis cs:18,171.98411950335)
--(axis cs:19,197.054586520797)
--(axis cs:20,198.392937187197)
--(axis cs:21,213.569631674062)
--(axis cs:22,217.661076868886)
--(axis cs:23,214.858167734891)
--(axis cs:24,213.918871699373)
--(axis cs:25,221.128291394047)
--(axis cs:26,223.142464609979)
--(axis cs:27,224.846641142181)
--(axis cs:28,241.65443522702)
--(axis cs:29,253.192425031069)
--(axis cs:30,246.435587133551)
--(axis cs:31,240.812123383369)
--(axis cs:32,264.173474477439)
--(axis cs:33,267.277524525826)
--(axis cs:34,261.895581338772)
--(axis cs:35,268.470542057043)
--(axis cs:35,845.170503949918)
--(axis cs:35,845.170503949918)
--(axis cs:34,880.84057233312)
--(axis cs:33,851.589080143343)
--(axis cs:32,842.141615071723)
--(axis cs:31,835.525112036412)
--(axis cs:30,860.549504314198)
--(axis cs:29,841.567662164612)
--(axis cs:28,714.787530245177)
--(axis cs:27,741.964011693454)
--(axis cs:26,715.980026576312)
--(axis cs:25,668.655138415136)
--(axis cs:24,686.029642629301)
--(axis cs:23,625.389171219423)
--(axis cs:22,589.289971089361)
--(axis cs:21,514.85617251398)
--(axis cs:20,502.706839809535)
--(axis cs:19,471.102942099892)
--(axis cs:18,469.777064226884)
--(axis cs:17,396.091578620114)
--(axis cs:16,383.841183815272)
--(axis cs:15,363.58553032484)
--(axis cs:14,322.579562145558)
--(axis cs:13,284.584630459211)
--(axis cs:12,271.113574520577)
--(axis cs:11,236.929715217361)
--(axis cs:10,196.457942828555)
--(axis cs:9,171.716491298751)
--(axis cs:8,143.971564001141)
--(axis cs:7,125.230326754508)
--(axis cs:6,102.142828146241)
--(axis cs:5,85.5520943127769)
--(axis cs:4,67.9256362188427)
--(axis cs:3,49.3749953269731)
--(axis cs:2,28.0980947277002)
--(axis cs:1,9.55578749024886)
--(axis cs:0,0)
--cycle;

\addplot [line width=1.5, green01270]
table {%
0 51.6022694155895
1 126.399229637821
2 182.512893888081
3 230.772364413033
4 274.120663008581
5 313.967264817325
6 351.134272696964
7 386.156025066279
8 419.405043761825
9 451.154252948268
10 481.611186390888
11 510.938301329899
12 539.26576978067
13 566.699914215934
14 593.328983540509
15 619.227232987074
16 644.45788268603
17 669.075311892409
18 693.126718355196
19 716.653394787315
20 739.691725686769
21 762.273976272703
22 784.42892442803
23 806.182372392279
24 827.557565164902
25 848.575535691378
26 869.255391975532
27 889.614557681967
28 909.668975157416
29 929.433277835893
30 948.92093751193
31 968.144390838143
32 987.115148535492
33 1005.84389013076
34 1024.34054650815
35 1042.61437214546
};
\addlegendentry{CVaR-regret with UCB}
\addplot [line width=1.5, red]
table {%
0 0
1 7.86617303846578
2 22.1722288288784
3 38.0359490990182
4 51.8925675609304
5 66.0534355733843
6 79.6549074113402
7 98.7677369390774
8 115.598254365549
9 135.560727453018
10 154.31060915683
11 179.25876298355
12 201.864544075387
13 215.001759774197
14 240.585973050894
15 261.004655742159
16 281.153812754757
17 290.37151907835
18 320.880591865117
19 334.078764310344
20 350.549888498366
21 364.212902094021
22 403.475523979124
23 420.123669477157
24 449.974257164337
25 444.891714904591
26 469.561245593145
27 483.405326417817
28 478.220982736099
29 547.38004359784
30 553.492545723875
31 538.168617709891
32 553.157544774581
33 559.433302334584
34 571.368076835946
35 556.82052300348
};
\addlegendentry{CVaR-regret with Alg. 1}
\end{axis}

\end{tikzpicture}
\caption{CVaR-regret with UCB following strategic selection and risk-averse participation for an average risk of $0.9$ per
device. }
\label{fig:regret}
\end{figure}
Finally, in Fig.~\ref{fig:regret}, we show the CVaR-regret of our algorithm as compared to optimistic estimates, as shown in \eqref{eq:cvar_regret}. We observe the risk-averse participation approach offers lower CVaR-regret as compared to the worst-case scenario, offered for $90\%$ accuracy on a non-i.i.d. setting. This suggests a competitive sub-linear performance of our approach, offering better convergence benefits.

	\section{Conclusion and Future Work}
We evaluated the relevance of local model parameters for goal-oriented communications in FL through risk-averse profiling of participating devices. The proposed approach incorporates feedback to stir the transmission strategies of distributed devices towards a common goal of training a communication-efficient, better-generalized, high-quality FL model, hence, improving the overall goal-oriented performance.

In this work, we offer a goal-oriented perspective on semantic communication with data valuation for edge intelligence. Our early results suggest how fundamental it is to develop theoretical and algorithmic methodologies that capture the relevance of data prior to its transmission, to realize goal-oriented communications. While it is yet a challenge to ensure fully distributed coordination of transmission strategies, our future works will focus on timeliness in exploiting efficient data valuation techniques to anchor such attributes of semantic communications. Another potential direction of our interest is to devise techniques to manage the scale and utility of participation of devices in the networked real-time systems for goal-oriented tasks.

	\bibliographystyle{ieeetr}
	\bibliography{ref}

\begin{thebibliography}{10}

\bibitem{gunduz2022beyond}
D.~G{\"u}nd{\"u}z, Z.~Qin, I.~E. Aguerri, H.~S. Dhillon, Z.~Yang, A.~Yener,
  K.~K. Wong, and C.-B. Chae, ``Beyond transmitting bits: Context, semantics,
  and task-oriented communications,'' {\em IEEE Journal on Selected Areas in
  Communications}, vol.~41, no.~1, pp.~5--41, 2022.

\bibitem{kountouris2021semantics}
M.~Kountouris and N.~Pappas, ``Semantics-empowered communication for networked
  intelligent systems,'' {\em IEEE Communications Magazine}, vol.~59, no.~6,
  pp.~96--102, 2021.

\bibitem{strinati20216g}
E.~C. Strinati and S.~Barbarossa, ``6g networks: Beyond shannon towards
  semantic and goal-oriented communications,'' {\em Computer Networks},
  vol.~190, p.~107930, 2021.

\bibitem{10024766}
W.~Xu, Z.~Yang, D.~W.~K. Ng, M.~Levorato, Y.~C. Eldar, and M.~Debbah, ``Edge
  learning for b5g networks with distributed signal processing: Semantic
  communication, edge computing, and wireless sensing,'' {\em IEEE Journal of
  Selected Topics in Signal Processing}, vol.~17, no.~1, pp.~9--39, 2023.

\bibitem{mcmahan2017communication}
B.~McMahan {\em et~al.}, ``Communication-efficient learning of deep networks
  from decentralized data,'' in {\em International Conference on Artificial
  Intelligence and Statistics (AISTATS)}, Fort Lauderdale, FL, USA, April 2017.

\bibitem{kairouz2021advances}
P.~Kairouz {\em et~al.}, ``Advances and open problems in federated learning,''
  {\em Foundations and Trends in Machine Learning}, vol.~14, no.~1--2,
  pp.~1--210, 2021.

\bibitem{nishio2019client}
T.~Nishio and R.~Yonetani, ``Client selection for federated learning with
  heterogeneous resources in mobile edge,'' in {\em IEEE International
  Conference on Communications (ICC)}, Shanghai, China, May 2019.

\bibitem{yoshida2020mab}
N.~Yoshida {\em et~al.}, ``Mab-based client selection for federated learning
  with uncertain resources in mobile networks,'' in {\em IEEE Globecom
  Workshops (GC Wkshps)}, Taipei, Taiwan, Dec. 2020.

\bibitem{pandey2022contribution}
S.~R. Pandey, L.~D. Nguyen, and P.~Popovski, ``A contribution-based device
  selection scheme in federated learning,'' {\em IEEE Communications Letters},
  2022.

\bibitem{shapley1953value}
L.~S. Shapley {\em et~al.}, ``A value for n-person games,'' 1953.

\bibitem{pandey2022fedtoken}
S.~R. Pandey, L.~Nguyen, and P.~Popovski, ``Fedtoken: Tokenized incentives for
  data contribution in federated learning,'' in {\em Workshop on Federated
  Learning: Recent Advances and New Challenges (in Conjunction with NeurIPS
  2022)}.

\bibitem{agussurja2022convergence}
L.~Agussurja, X.~Xu, and B.~K.~H. Low, ``On the convergence of the shapley
  value in parametric bayesian learning games,'' in {\em International
  Conference on Machine Learning}, pp.~180--196, PMLR, 2022.

\bibitem{jia2019towards}
R.~Jia, D.~Dao, B.~Wang, F.~A. Hubis, N.~Hynes, N.~M. G{\"u}rel, B.~Li,
  C.~Zhang, D.~Song, and C.~J. Spanos, ``Towards efficient data valuation based
  on the shapley value,'' in {\em The 22nd International Conference on
  Artificial Intelligence and Statistics}, pp.~1167--1176, PMLR, 2019.

\bibitem{rockafellar2000optimization}
R.~T. Rockafellar, S.~Uryasev, {\em et~al.}, ``Optimization of conditional
  value-at-risk,'' {\em Journal of risk}, vol.~2, pp.~21--42, 2000.

\bibitem{dvoretzky1956asymptotic}
A.~Dvoretzky, J.~Kiefer, and J.~Wolfowitz, ``Asymptotic minimax character of
  the sample distribution function and of the classical multinomial
  estimator,'' {\em The Annals of Mathematical Statistics}, pp.~642--669, 1956.

\bibitem{tamkin2019distributionally}
A.~Tamkin, R.~Keramati, C.~Dann, and E.~Brunskill, ``Distributionally-aware
  exploration for cvar bandits,'' in {\em NeurIPS 2019 Workshop on Safety and
  Robustness on Decision Making}, 2019.

\bibitem{hall1967measures}
M.~Hall and N.~Tideman, ``Measures of concentration,'' {\em Journal of the
  american statistical association}, vol.~62, no.~317, pp.~162--168, 1967.

\bibitem{lecun1998gradient}
Y.~LeCun, L.~Bottou, Y.~Bengio, and P.~Haffner, ``Gradient-based learning
  applied to document recognition,'' {\em Proceedings of the IEEE}, vol.~86,
  no.~11, pp.~2278--2324, 1998.

\end{thebibliography}

\end{document}